\newtheorem{theorem}{Theorem}[section]
\renewcommand{\author}[1]{\gdef\@author{\parbox{\textwidth}{\raggedright #1}}}
\newcommand{\bm}{\mathbf{m}}
\newcommand{\bX}{\mathbf{X}}
\begin{document}
	
\bibliographystyle{unsrtnat}
	
\def\spacingset#1{\renewcommand{\baselinestretch}{#1}\small\normalsize} \spacingset{1}
	
\title{A Bivariate Poisson-Gamma Distribution: Statistical Properties and Practical Applications}
\author{
{\small Indranil Ghosh$^{1}$, Mina Norouzirad$^{2}$\thanks{Corresponding author: m.norouzirad@fct.unl.pt}, Filipe J. Marques$^{2,3}$}\\[2em]
{\footnotesize $^{1}$ Department of Mathematics and Statistics, University of North Carolina, Wilmington, USA}\\
{\footnotesize $^{2}$Center for Mathematics and Applications (NOVA Math), NOVA School of Science and }\\ 
{\footnotesize Technology (NOVA FCT), NOVA University Lisbon, Caparica, Portugal}\\
{\footnotesize $^{3}$Department of Mathematics, NOVA School of Science and Technology (NOVA FCT),}\\ 
{\footnotesize NOVA University Lisbon, Caparica, Portugal}
}
\date{}
\maketitle
	
\begin{abstract} 
Although the specification of bivariate probability models using a collection of assumed conditional distributions is not a novel concept, it has received considerable attention in the last decade.  In this study, a bivariate distribution—the bivariate Poisson-Gamma conditional distribution—is introduced, combining both univariate continuous and discrete distributions. This work explores aspects of this model's structure and statistical inference that have not been studied before. This paper contributes to the field of statistical modeling and distribution theory through the use of maximum likelihood estimation, along with simulations and analyses of real data.\\

\noindent \textbf{Key Words:} Infinite divisibility, Log-convexity, Mixture distribution, Poisson-Gamma mixture, Skewed data, Stochastic ordering, Reverse rule of order 2.
\end{abstract}
	
\section{Introduction}\label{Introduction}
Many different areas of study have made use of bivariate distributions to delve into real-world phenomena; they include environmental data, cluster analysis, actuarial science, under- and over-dispersed data modeling, Bayesian statistics, and many more. Important things that should be mentioned are listed below: Three bivariate Poisson binomial distribution types by \citet{Charalambides:Papageorgiou:1981}; two variants of a bivariate Hermite distribution, by \citet{Kemp:Papageorgiou:1982}, one with eight parameters and the other with five. \cite{Kocherlakota:Kocherlakota:1992} conducted a comprehensive survey on various types of bivariate and multivariate discrete distributions, highlighting their applications across fields including plant pathology, physical sciences, and accident statistics. Indeed, the list is incomplete. The univariate Poisson distribution is widely used in modeling count data, particularly for rare events. Therefore, there is a significant amount of statistical literature addressing the application of the Poisson distribution and its various mixtures, leading to bivariate and multivariate extensions. \citet{Greenwood:Yule:1920} are recognized as pioneers in introducing a blend of Poisson distributions that incorporate a gamma distribution for the Poisson parameter $\lambda$.

Continuing on, we provide a few helpful citations related to developments on higher-dimensional mixtures of discrete and continuous probability models. \citet{Bibby:Vaet:2011} developed a two-dimensional beta-binomial distribution employing a joint beta distribution proposed by \citet{Jones:2002} to represent the joint probability of successes. \citet{Sarabia:Gomez-Deniz:2011} suggested and examined multivariate versions of the beta mixture of Poisson distribution, noting that this class of distributions has a flexible covariance structure. \citet{Karlis:Xekalaki:2005} explored novel characteristics of Poisson mixtures. 

Next, we concentrate on a specific type of mixture distributions that includes Poisson and gamma distributions. Following are a number of relevant sources: In studying the correlation structure between accidents and accident proneness, \citet{Arbous:Kerrich:1951} examined the role of a mixture of Poisson and gamma. Based on a combination of Poisson and gamma distributions, \citet{Bates:Neyman:1952} examined the relationship between light and severe accidents. \cite{Nelson:1985} built upon the research of \cite{Arbous:Kerrich:1951} by including a Dirichlet distribution into the analysis of cross-sectional data and permitting individual rates to change at the start of the second period in the study of two-period longitudinal data. \cite{Lemaire:1979} examined the suitability of combining Poisson and gamma distributions in developing a bonus-malus system with an exponential utility function. In a related study, \cite{Piegorsch:1990} focused on maximum likelihood estimators for the dispersion parameters within a combination of Poisson and gamma distributions. Moreover, \cite{Gomez-Deniz:Calderin-Ojeda:2014} proposed a new bivariate Conway-Maxwell Poisson distribution for application in risk theory, using conditional specifications in its derivation.

Remarkably, no bivariate combination of Poisson-gamma distribution was taken into consideration from the perspective of conditional specification in any of the aforementioned references. This set-up has two conditionals: One is a Poisson whose intensity parameter adjusts depending on the conditioning variable. The other is a two-parameter gamma whose shape and scale parameters also change based on the conditioning variable.  In this study, we investigate a bivariate Poisson-gamma distribution derived from two conditional distributions initially proposed by \citet{Arnold:1999}. There may be a good reason to study this bivariate distribution if this probability model has rich enough correlation and marginal over-dispersion. Subsequently, we present fundamental preliminaries about the previously described bivariate Poisson-gamma conditionals (BPGC) distribution along with real-world application that has not been discussed earlier in the literature.

The remainder of the paper is structured as follows:  Section \ref{sec:2} presents the BPGC distribution along with its structural characteristics. Within a classical framework, the estimation of model parameters is discussed in Section \ref{sec:Mle}. For the sake of illustration, Section \ref{sec:4} shows a thorough simulation study for a number of common model parameter choices. Section \ref{sec:5} examines a real-world data set to show the practical application of the BPGC distribution. Finally, Section \ref{sec:6} offers some concluding remarks.

\section{Structural Properties of the Bivariate Poisson-Gamma Conditional Distributions}\label{sec:2}

Consider a bivariate random vector $\left(X,Y\right)$, where both variables are defined on $\mathbb{R^{+}}$. According to this,
\begin{itemize}
\item The conditional distribution $X|Y = y$ adheres to a Poisson distribution for each fixed $Y=y$, and its mean parameter defined is $\exp\{m_{10} - m_{11}y + m_{12} \log(y)\}$. Specifically,
\begin{equation}\label{eq:dist_X|Y}
    X|Y=y\sim {\rm Poisson}\left(\exp\{m_{10} - m_{11} y + m_{12} \log(y)\}\right),
\end{equation}
where $m_{10}$ is positive and both $m_{11}$ and $m_{12}$ are non-negative.

\item  An example of a Gamma distribution with shape parameter $m_{02} + m_{12}x$ and scale parameter $m_{01} + m_{11} xy$ would be the conditional distribution $Y|X = x$ for every fixed $X=x$. 
That is,
\begin{equation}\label{eq:dist_Y|X}
    Y|X=x\sim {\rm Gamma}\left(m_{02}+m_{12}x, m_{01}+m_{11}xy\right).
\end{equation}
Both $m_{01}$ and $m_{02}$ are positive, while both $m_{11}$ and $m_{12}$ are non-negative.
\end{itemize}

Subsequently, in accordance with \citet{Arnold:1999}, the general formulation of the joint distribution of the BPGC, which is part of a broad category of conditionals within the Exponential families, is expressed as\begin{eqnarray}\label{eq:BPGC}
     f\left(x,y|\bm\right) &=& \left(x! y\right)^{-1}\exp\bigg(c + m_{10}x-m_{01}y-m_{11}xy  +m_{02}\log y+m_{12}x\log y\bigg),\cr 
    && \hspace{7cm} x=0,1,\ldots, \quad  y>0, \quad  
\end{eqnarray}
with $\bm = \left(m_{10}, m_{01}, m_{11}, m_{02}, m_{12}\right)$. Let $m_{01}, m_{02}, m_{10} > 0$, and $m_{11}, m_{12} \geq 0$. The normalizing constant $c$ is defined as
$$
c = - \log  \sum_{x = 0}^{\infty} \int_{0}^{\infty} (x! y)^{-1}\exp\bigg(m_{10}x-m_{01}y-m_{11}xy  +m_{02}\log y+m_{12}x\log y\bigg) {\rm d}y .
$$

Several practical structural characteristics and their applications will be discussed in this paper for the bivariate distribution with joint density in \eqref{eq:BPGC}, which comprises both continuous and discrete random variables, enabling the modeling of several phenomena. Such forms of bivariate distributions are rare in the statistical literature. For example, identifying bivariate distributions in which one marginal is continuous while the other is discrete is challenging. As \citet{Arnold:1999} points out, the BPGC distribution has two interesting special cases:
\begin{itemize} 
\item The distribution aligns with the compound Poisson distribution when \( m_{11} = 0 \) and \( m_{12} = 1 \) \citep{Feller:1957}. 
\item If $m_{11} = m_{12} = 0$, then $X$ and $Y$ are independent.
\end{itemize}

The R library \texttt{BPGC} is made available at \url{https://github.com/mnrzrad/BPGC} in an effort to improve the accessibility of the BPGC distribution. The function \texttt{dBPGC} in this library computes the BPGC distribution at an arbitrary point $(x, y)$. The next part of this section will look at some useful structural characteristics of the joint distribution shown in \eqref{eq:BPGC}.

The marginal probability mass function (p.m.f.) of the discrete random variable $X$ is defined as
\begin{eqnarray}\label{eq:h(X)}
    f_{X}(x | \bm) &=& \frac{ \exp\left(c + m_{10} x \right)}{x!} \int_{0}^{\infty} y^{m_{02} + m_{12}x - 1}\exp\left(-y \left(m_{01} + m_{11}x \right)\right) dy\cr 
    && \cr &&\cr 
    &=&\frac{\Gamma\left(m_{02} + m_{12} x \right)\exp\left(c + m_{10}x \right)}{x! \left(m_{01} + m_{11} x \right)^{m_{02} + m_{12} x}}, \quad x = 0, 1, \ldots,
\end{eqnarray}
where $\Gamma(m)=\int_{0}^{\infty}x^{m-1} \exp(-x)dx$ denotes the standard gamma function. Note that the generalization of the negative binomial distribution forms the marginal p.m.f. for $X$ in \eqref{eq:h(X)}.

Likewise, the marginal density of the continuous random variable $Y$ can be found as follows
\begin{eqnarray}\label{eq:g(y)}
    f_Y(y | \bm) & = & \frac{\exp\left(c - m_{01} y + m_{02} \log(y)\right)}{y}
    \sum_{x = 0}^{\infty}\frac{\exp\left(x \left(m_{10} - m_{11}y + m_{12} \log(y)\right)\right)}{x!}\cr 
    && \cr && \cr 
    & = &\frac{\exp\left(c - m_{01} y + m_{02} \log(y)\right)}{y} \sum_{x = 0}^{\infty}\frac{\bigg(\exp\left(m_{10} - m_{11}y \right) y^{m_{12}}\bigg)^{x}}{x!}\cr 
    &&\cr &&\cr 
    & = & y^{m_{02} - 1}\exp\left(\exp\left(m_{10} - m_{11}y \right) y^{m_{12}} + c - m_{01} y \right), \quad y > 0. 
\end{eqnarray}
In Figure \ref{fig:0}, plots of joint density in \eqref{eq:BPGC} for different parameter choices are shown.

\begin{figure}[!htbp]
    \centering
    \begin{tabular}{cc}
        \includegraphics[scale = 0.32, trim={90mm 0mm 90mm 0mm},clip]{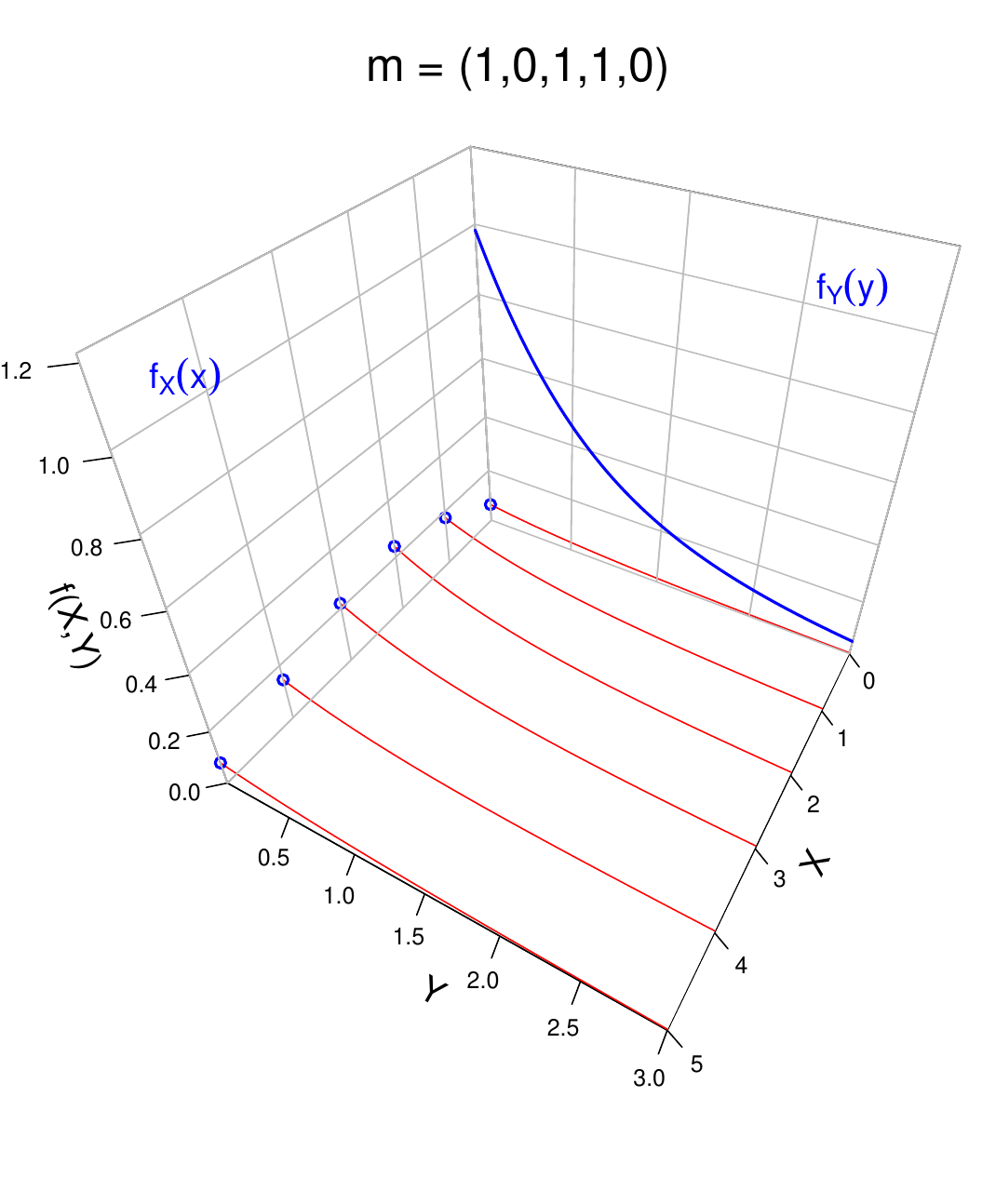} &       
        \includegraphics[scale = 0.32, trim={90mm 0mm 90mm 0mm},clip]{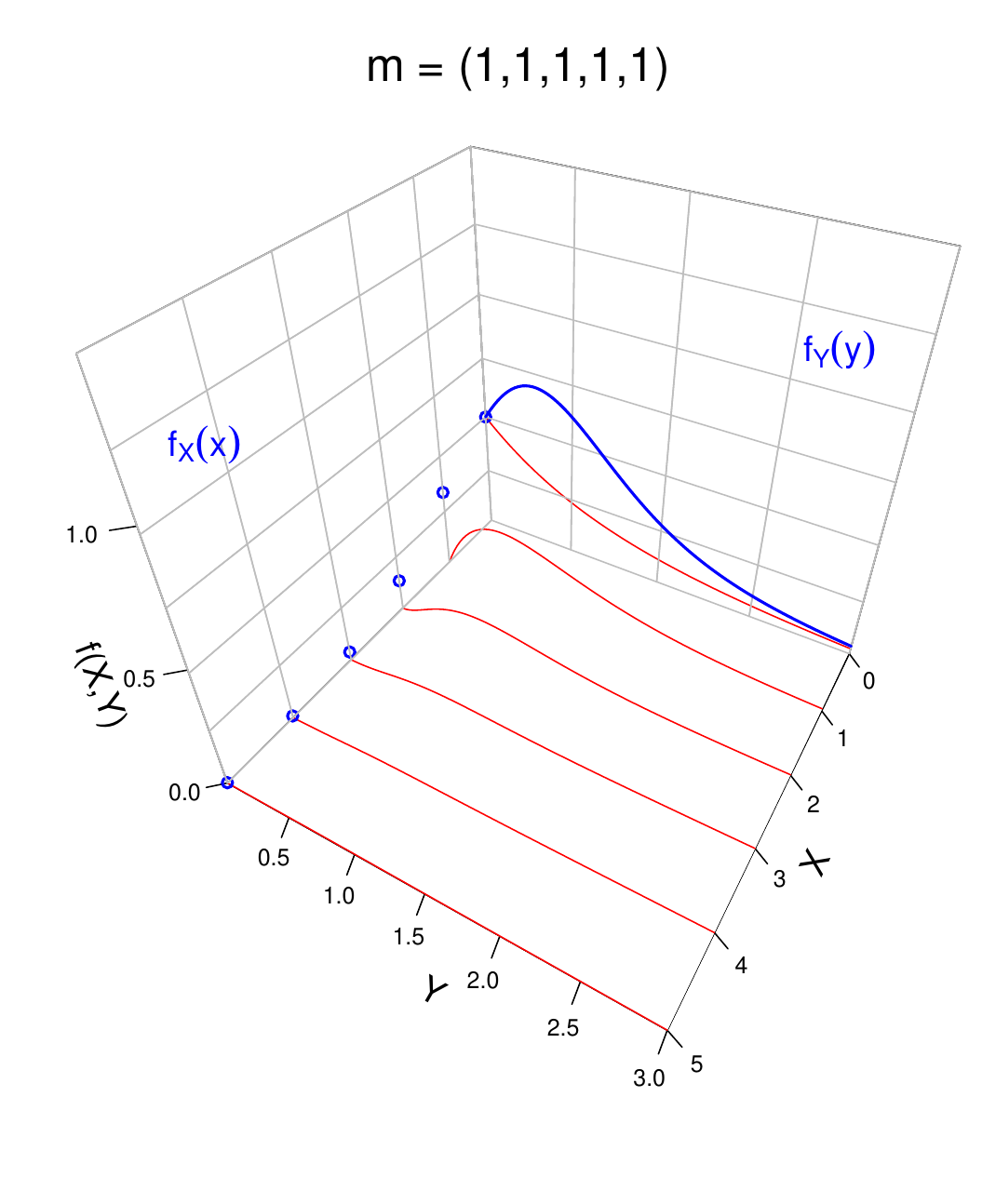} 
        \\
        \includegraphics[scale = 0.32, trim={90mm 0mm 90mm 0mm},clip]{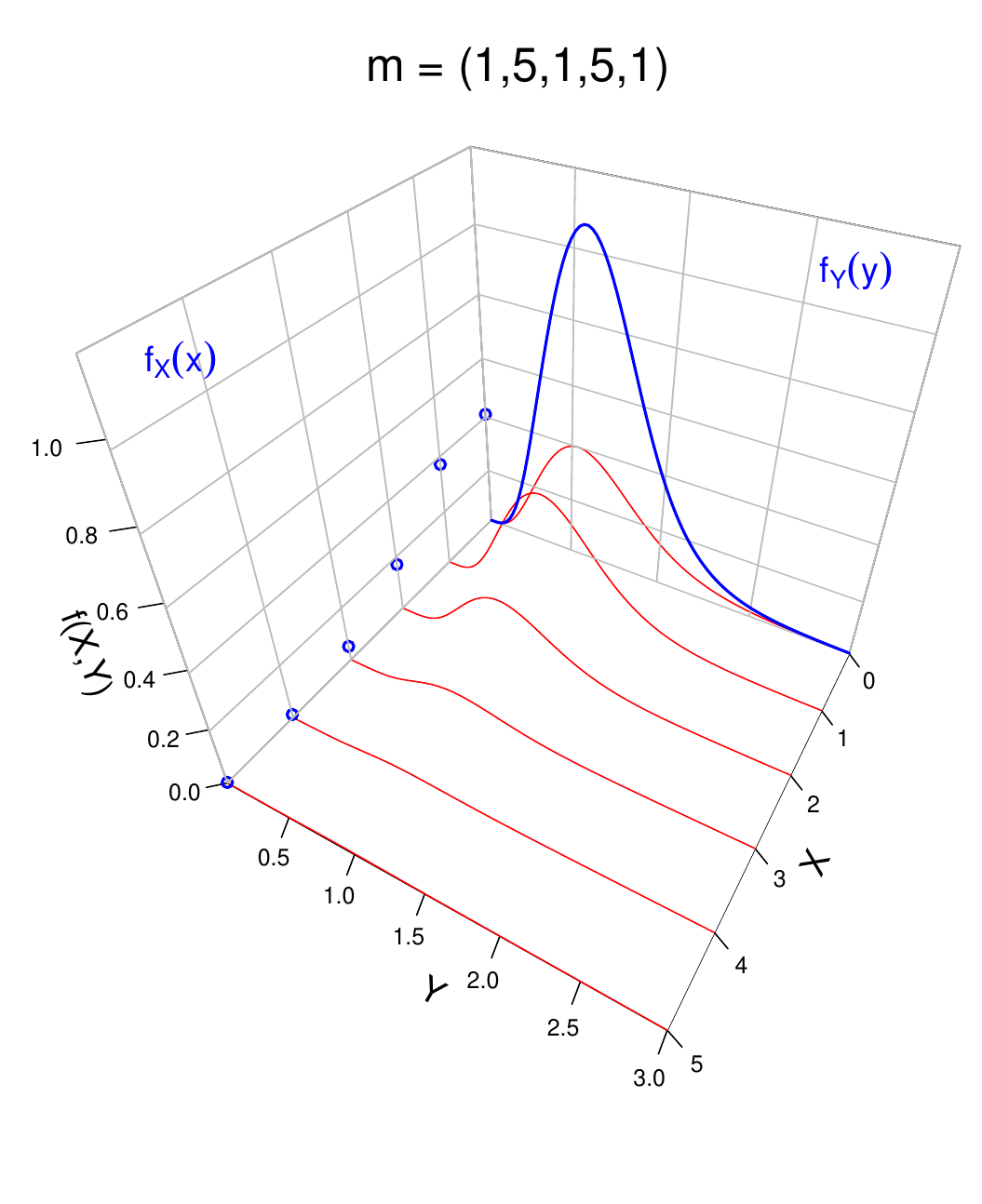} & \includegraphics[scale=0.32, trim={90mm 0mm 90mm 0mm},clip]{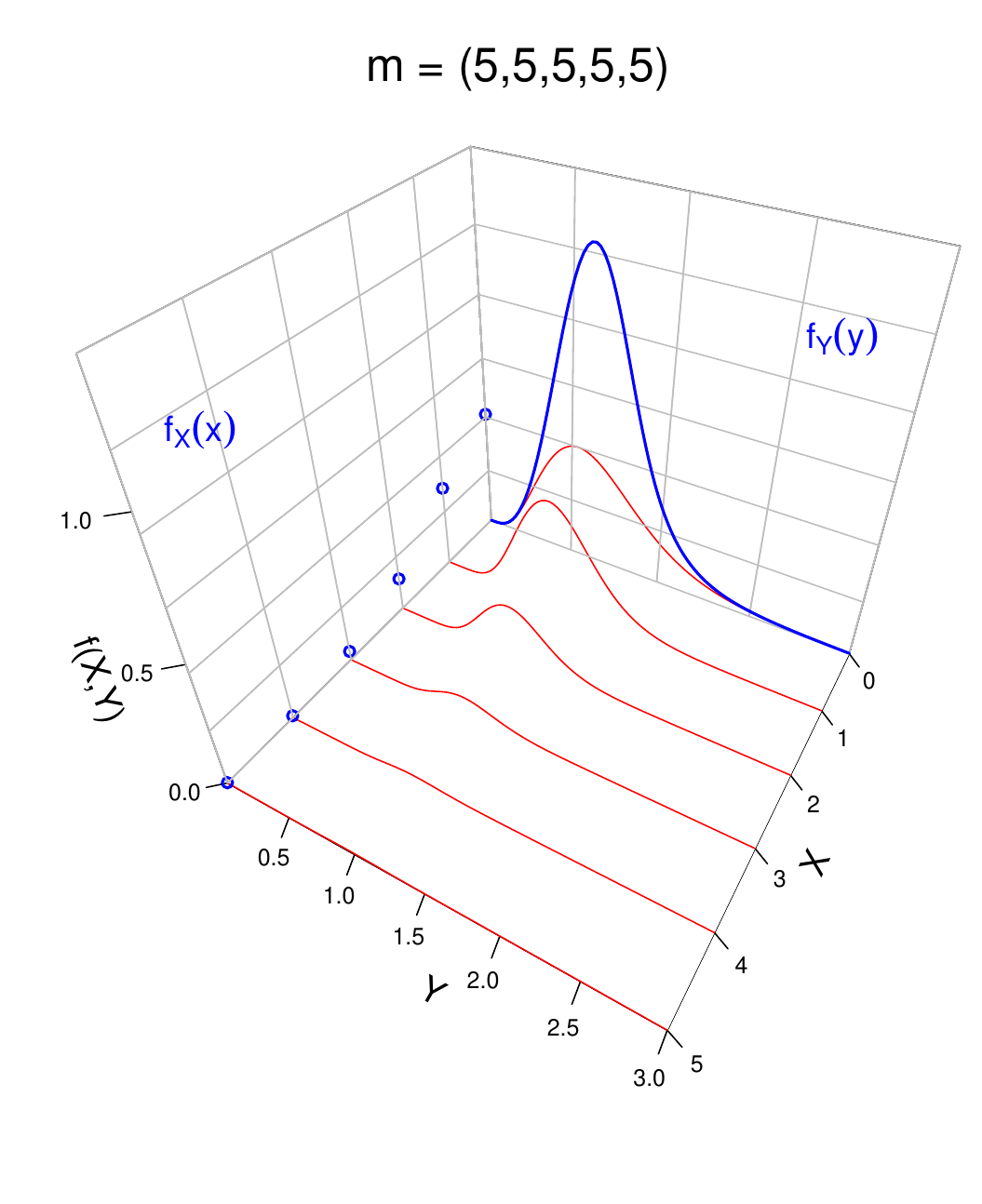}\\
    \end{tabular}
     \caption{Representative plots of  the joint and marginal distributions of BPGC for various choices of the model parameters.}\label{fig:0}
\end{figure}

\subsection{Total positivity of order two property}
Consider real numbers $x_{1}$, $x_{2}$, $y_{1}$, and $y_{2}$ where $0 < x_{1} < x_{2}$ and $0 < y_{1}< y_{2}$. Hence, the pair $\left(X,Y\right)$ is characterized by the total positivity of order two (TP$_{2}$) property if and only if
\begin{equation}\label{eq:TP2}
\begin{vmatrix}
f(x_{1}, y_{1} | \bm) & f(x_{1}, y_{2}| \bm)\\
f(x_{2}, y_{1} | \bm) & f(x_{2}, y_{2}| \bm)
\end{vmatrix}  \geq 0.
\end{equation}
The condition in \eqref{eq:TP2} for the joint distribution specified in \eqref{eq:BPGC} simplifies to
\begin{equation*}
    \exp\left( m_{12} \left( x_2 - x_1 \right) \left( \log(y_2) - \log(y_1) \right) \right)\, \exp\left(- m_{11} \left(x_2 - x_1 \right) \left( y_2 - y_1 \right)\right) \geq 0.
\end{equation*}
This holds true provided that $ 0 < x_{1} < x_{2}$ and $0 < y_{1} < y_{2}$. As a consequence, the joint distribution in \eqref{eq:BPGC} has the ${\rm TP}_{2}$ property which leads to the following implications:
\begin{itemize}
\item For any $x$, the conditional probability $P\left(X\leq x|Y=y\right)$ remains constant or decreases as $y$ increases.
\item For every $y$, the conditional probability $P\left(Y\leq y|X=x\right)$ remains constant or decreases as $x$ increases. 
\item For all $y$, the conditional probability $P\left(Y>y|X>x\right)$ remains constant or increases as $x$ increases.
\item The conditional probability $P\left(Y\leq y|X\leq x\right)$ is greater than or equal to $P\left(Y\leq y\right)P\left(X\leq x\right)$,
\item The conditional probability $P\left(Y>y|X>x\right)$ is greater than or equal to $P\left(Y>y\right)P\left(X>x\right)$.
\end{itemize}

\subsection{Exponential family}

\begin{theorem}
The distribution ${\rm BPGC}(\bm)$ is a member of the exponential family with five parameters.
\end{theorem}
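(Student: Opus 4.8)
The plan is to exhibit the joint density \eqref{eq:BPGC} directly in the canonical form of a $k$-parameter exponential family and then confirm that the representation is minimal with $k=5$. Recall that a joint density $f(x,y\mid\theta)$ belongs to a $k$-parameter exponential family if it admits the representation
\[
f(x,y\mid\theta)=h(x,y)\,\exp\Bigl(\textstyle\sum_{i=1}^{k}\eta_i(\theta)\,T_i(x,y)-B(\theta)\Bigr),
\]
for a carrier $h\ge 0$ that does not depend on $\theta$, natural parameters $\eta_i$, sufficient statistics $T_i$, and log-normalizer $B$.

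First, I would read off the components directly from \eqref{eq:BPGC}. Taking the carrier to be $h(x,y)=(x!\,y)^{-1}$, which is free of $\bm$, the remaining exponent groups termwise as $m_{10}\,x-m_{01}\,y-m_{11}\,(xy)+m_{02}\,\log y+m_{12}\,(x\log y)$, so that the natural parameters and sufficient statistics are
\[
(\eta_1,\dots,\eta_5)=(m_{10},\,-m_{01},\,-m_{11},\,m_{02},\,m_{12}),\qquad (T_1,\dots,T_5)=(x,\,y,\,xy,\,\log y,\,x\log y),
\]
with log-normalizer $B(\bm)=-c$, where $c$ is the normalizing constant defined immediately after \eqref{eq:BPGC}. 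This already displays the BPGC density in the required exponential-family form with exactly five natural-parameter/statistic pairs, and since each $\eta_i$ is an affine (indeed, signed-identity) function of the model parameters, the family is full rather than curved.

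Second, to confirm that the family genuinely has five parameters, and is not a lower-dimensional family in disguise, I would verify two points. For \emph{affine independence of the statistics}, the functions $1,\,x,\,y,\,xy,\,\log y,\,x\log y$ are linearly independent as functions on the support $\{0,1,\dots\}\times(0,\infty)$, so no $T_i$ is an affine combination of the others and the canonical coordinates cannot collapse to fewer than five. For a \emph{full-dimensional parameter range}, although the admissible set $\{m_{10},m_{01},m_{02}>0,\ m_{11},m_{12}\ge 0\}$ is a proper subset of $\mathbb{R}^5$, it contains a nonempty open subset, so the affine, invertible map $\bm\mapsto(\eta_1,\dots,\eta_5)$ sweeps out a five-dimensional set of natural parameters.

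I do not anticipate a serious obstacle, since the density is essentially already in exponential form; the substantive content is the minimality check. The only step requiring any care is the linear-independence argument, which I would settle by a short evaluation-based computation—selecting a handful of points $(x,y)$ and showing that the resulting linear system forces every coefficient to vanish—rather than by any lengthy calculation.
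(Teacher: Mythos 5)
Your proposal is correct and takes essentially the same route as the paper: both exhibit \eqref{eq:BPGC} in canonical exponential-family form with natural parameters $(m_{10},-m_{01},-m_{11},m_{02},m_{12})$ and sufficient statistics $(x,\,y,\,xy,\,\log y,\,x\log y)$. In fact your write-up is slightly more careful than the paper's, which contains a typographical double exponential and absorbs the $(x,y)$-dependent term $\log\left(x!\,y\right)^{-1}$ into $\Delta(\bm)$ rather than treating $(x!\,y)^{-1}$ as the carrier as you do, and your added minimality check (linear independence of the statistics and full-dimensionality of the natural parameter range) is a worthwhile extra that the paper omits.
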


\begin{proof}
The BPGC distribution in \eqref{eq:BPGC} is represented in the form
\begin{equation*}
    f\left(x,y|\bm \right) = \exp \left( \exp \Bigg( \sum_{j=1}^{5} {\delta_{j} \left( \bm \right) U_j(x,y)} - \Delta\left( \bm \right)\Bigg) \right),
\end{equation*}
where $\delta_{1}\left( \bm \right) = m_{10}$, $\delta_{2}\left(\bm \right) = -m_{01}$, $\delta_{3}\left( \bm \right) = -m_{11}$, $\delta_{4}\left(\bm \right) = m_{02}$, $\delta_{5}\left(\bm \right) = m_{12}$, $U_{1}(x,y) = x$, $U_{2}(x,y) = y$, $U_{3}(x,y) = xy$, $U_{4}(x,y) = \log(y)$, $U_{5}(x,y) = x\log(y)$, and $\Delta\left( \bm \right) = c + \log\left(x! y\right)^{-1}$. Hence, the joint distribution $\left(X,Y\right) \sim {\rm BPGC}(\bm)$ belongs to the five-parameter exponential family.
\end{proof}

Consequently, for a random sample of size $n$ drawn from the ${\rm BPGC}(\bm)$ distribution, the statistics $\sum_{i}x_{i}$, $\sum_{i}y_{i}$, $\sum_{i}x_{i}y_{i}$,$\sum_{i}\log y_{i}$, and $\sum_{i}x_{i}\log y_{i}$ form a complete and sufficient statistic for $\bm$.

Distributions in the exponential family have several useful characteristics. For instance, mean, variance, covariance, and moment-generating functions can be easily obtained by differentiating $\Delta\left(\bm\right)$. Moreover, using the Lehmann-Scheffe Theorem (see \cite{Lehmann:Casella:2006} for details), it may be possible to derive a uniformly minimum-variance unbiased estimator (UMVUE) for the parameters, provided that an unbiased function of $T$ can be found. Even in cases where this is not achievable, the minimum-variance unbiased estimator (MVUE) can still be obtained through bias correction. For the definition and properties of a UMVUE and MVUE, interested readers are advised to read \citet{Lehmann:Casella:2006} and the references therein.

\subsection{Shape of the distribution}
A function of two variables reaches a critical point when its first-order partial derivatives are equal to zero. Identifying these critical points in a bivariate probability distribution serves two primary purposes: 1) to understand the overall shape of the distribution for evaluating its flexibility in modeling observed phenomena with similar patterns, and 2) to emphasize the importance of peakedness property in bivariate distributions.  Peakedness is a descriptive measure that indicates the concentration or dispersion of the distribution, particularly reflecting the fatness of the tails of the distribution. The goal here is to determine the unique mode of the specified bivariate probability distribution. With this in mind, let us now analyze the shape of the BPGC distribution as presented in \eqref{eq:BPGC}.


A first derivative of the function $\log(f(x,y))$ with regard to $x$ can be written as
\begin{equation*}
    \frac{\partial \log(f(x, y|\bm))}{\partial x} = -\frac{\Gamma(x + 1) \psi^{(0)}(x+1)}{x!}+m_{10}-m_{11} y+m_{12} \log(y),
\end{equation*}
where $\psi^{(0)}(\cdot)$ is the polygamma function defined by 
$$
\psi^{(0)}(x) = \psi(z) = {\Gamma'(x)}/{\Gamma(x)}.
$$
Furthermore, the partial derivative with respect to $y$ is given by
\begin{equation*}
    \frac{\partial \log(f(x, y | \bm))}{\partial y} = \frac{- m_{01} y + m_{02} - m_{11} x y + m_{12} x - 1}{y}.
\end{equation*}
So, there could be several important points for this model. Figure \ref{fig:0} confirms this statement.

\subsection{Stochastic ordering}
In statistical literature, distribution ordering is important, especially when it comes to lifetime distributions. \cite{Johnson:Kotz:Balakrishnan:1995} extensively discussed the ordering of various lifetime distributions. This paper, four different stochastic orders are considered: the usual order, the hazard rate order, the mean residual life  order, and the likelihood ratio order for two independent random variables within a constrained parameter space. It is known that a family that has the monotone likelihood ratio characteristic also has likelihood ratio ordering. This property signifies that a uniformly most powerful test exists for any one-sided hypothesis when the other parameters are fixed.

Assuming independent random variables $X$ and $Y$ with cumulative distribution functions (c.d.f.'s) $F_{X}$ and $F_{Y}$, respectively, we have
\begin{enumerate}
    \item[1)] $X\geq_{\rm st}Y$ indicates that $X$ is greater than $Y$ in the stochastic order if, for all $x$, $F_{X}(x|\bm)\leq F_{Y}(x|\bm)$.
    \item[2)] $X\geq_{\rm hr}Y$ denotes that $X$ is greater than $Y$ in the hazard rate order if, for every $x$, $h_{X}(x|\bm)\leq h_{Y}(x|\bm)$, where $h_{X}(x|\bm)$ and $h_{Y}(y|\bm)$ are the hazard rate functions, given by
    \begin{eqnarray*}
    h_{X}(x|\bm) = \frac{f_{X}(x|\bm)}{1 - F_X(x|\bm)} \quad \text{and} \quad h_{Y}(y|\bm) = \frac{f_{Y}(y|\bm)}{1 - F_Y(y|\bm)}
    \end{eqnarray*}
    with $f_X(x|\bm)$ and $f_Y(y|\bm)$ as defined in \eqref{eq:h(X)} and \eqref{eq:g(y)}, respectively.
    \item[3)] $X\geq_{\rm mrl}Y$ indicates that $X$ is greater than $Y$ in the mean residual life order if, for all $x$, $m_{X}(x|\bm)\leq m_{Y}(x|\bm)$ where $m_{X}(x|\bm)$ and $m_{Y}(y|\bm)$ are the mean residual life functions given by
    \begin{equation*}
        m_{X}(x|\bm) = {\rm E}(X - t | X > t) \, = \, \frac{1}{1 - F_X(x|\bm)} \int_t^{\infty} F_X(u|\bm) {\rm d}u
    \end{equation*}
    and 
    \begin{equation*}
        m_{Y}(y|\bm) = {\rm E}(Y - t | Y > t) \, = \, \frac{1}{1 - F_Y(y|\bm)} \int_t^{\infty} F_Y(u|\bm) {\rm d}u,
    \end{equation*}
    respectively.
    \item[4)] $X\geq_{\rm lr}Y$ denotes that $X$ is greater than $Y$ in the likelihood ratio order if, for all $x$, ${f_{X}(x|\bm)}\,/\,{f_{Y}(x|\bm)}$ decreases in $x$.
\end{enumerate}

According to \cite{Shaked:Shanthikumar:1994}, the following well-known results establish the stochastic ordering of distributions:
\begin{eqnarray*}
(X\geq_{\rm lr}Y)\; \Rightarrow &(X\geq_{\rm hr}Y)& \Rightarrow \; (X\geq_{\rm mrl} Y)\nonumber\\ &\Downarrow\cr 
&(X\geq_{\rm st}Y)&
\end{eqnarray*}

The BPGC distribution is ordered according to the strongest ordering, i.e. the  likelihood ratio ordering, as proved in the following theorem. This result highlights the flexibility of the BPGC distribution.

\begin{theorem}\label{thm:1}
 Let $X$ and $Y$ follow marginal distributions $f_X(x|\bm)$ and $f_Y(y|\bm)$ as given in \eqref{eq:h(X)} and \eqref{eq:g(y)} respectively. If $m_{10}, m_{11}\geq 0,$ $m_{12}\geq 1$ and $m_{02}> 0$; then, $X\geq_{\rm lr} Y$, $X\geq_{\rm hr} Y$, $X\geq_{\rm mrl} Y$, and $X\geq_{\rm st} Y$.
 \end{theorem}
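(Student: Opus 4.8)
The plan is to exploit the hierarchy of stochastic orders recalled just before the statement: since
$(X \geq_{\rm lr} Y) \Rightarrow (X \geq_{\rm hr} Y) \Rightarrow (X \geq_{\rm mrl} Y)$ and $(X \geq_{\rm hr} Y) \Rightarrow (X \geq_{\rm st} Y)$ by \citet{Shaked:Shanthikumar:1994}, it is enough to establish the single strongest relation, the likelihood ratio order $X \geq_{\rm lr} Y$; the other three then follow immediately, since all four orders are defined purely through the two marginals. Thus the entire proof reduces to a monotonicity analysis of the ratio of the marginal densities.

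First I would form $R(t) = f_X(t\,|\,\bm)/f_Y(t\,|\,\bm)$, reading the marginal p.m.f.\ in \eqref{eq:h(X)} as a function of a continuous argument $t>0$ (with $x!$ interpreted as $\Gamma(t+1)$) so that it can be compared with the density in \eqref{eq:g(y)}. Setting $g(t) = \log f_X(t\,|\,\bm) - \log f_Y(t\,|\,\bm)$, the normalising constant $c$ cancels and one is left with
\begin{align*}
g(t) = {} & \log\Gamma(m_{02}+m_{12}t) - \log\Gamma(t+1) - (m_{02}+m_{12}t)\log(m_{01}+m_{11}t) \\
& {} - (m_{02}-1)\log t + (m_{10}+m_{01})t - \exp(m_{10}-m_{11}t)\,t^{m_{12}}.
\end{align*}
Establishing $X \geq_{\rm lr} Y$ then amounts to showing that $g$ is monotone on $(0,\infty)$, i.e.\ that $g'(t)$ has a constant sign, and that this sign is the one required by item~4) of the definition.

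Differentiating yields
\begin{align*}
g'(t) = {} & m_{12}\,\psi(m_{02}+m_{12}t) - \psi(t+1) - m_{12}\log(m_{01}+m_{11}t) - \frac{m_{11}(m_{02}+m_{12}t)}{m_{01}+m_{11}t} \\
& {} - \frac{m_{02}-1}{t} + m_{10}+m_{01} - \exp(m_{10}-m_{11}t)\,t^{m_{12}-1}(m_{12}-m_{11}t),
\end{align*}
where $\psi$ is the digamma function introduced above. I would analyse the two structurally distinct contributions separately: the digamma difference $m_{12}\psi(m_{02}+m_{12}t)-\psi(t+1)$, to be controlled by the standard bounds $\log z - 1/z < \psi(z) < \log z$ for $z>0$ together with the monotonicity of $\psi$, and the last term, inherited from the double-exponential factor of $f_Y$ in \eqref{eq:g(y)}, whose sign is dictated by $m_{12}-m_{11}t$ and whose size I would bound using $m_{11}\ge 0$ and $m_{12}\ge 1$.

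The hard part is precisely pinning down the sign of $g'(t)$ uniformly in $t>0$ over the admissible region $m_{10},m_{11}\ge 0$, $m_{12}\ge 1$, $m_{02}>0$, because the digamma/logarithmic terms and the transcendental term $\exp(m_{10}-m_{11}t)\,t^{m_{12}-1}(m_{12}-m_{11}t)$ pull in opposite directions on different ranges of $t$. My strategy would be to treat $m_{11}=0$ directly (the last term is then $-\exp(m_{10})\,m_{12}\,t^{m_{12}-1}\le 0$, which simplifies the sign analysis) and, for $m_{11}>0$, to split $(0,\infty)$ at $t=m_{12}/m_{11}$, the zero of the offending factor, showing on each piece that the combined logarithmic and rational terms dominate the digamma difference via the bounds above, while the hypothesis $m_{12}\ge 1$ guarantees that the $f_Y$-tail decays fast enough for $R$ to be eventually monotone. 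A limiting comparison of the two marginal tails as $t\to\infty$ and as $t\to 0^{+}$ would then identify the monotone direction as the decreasing one, giving $X\geq_{\rm lr}Y$ and hence the remaining three orders.
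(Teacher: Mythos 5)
Your overall route coincides with the paper's: reduce all four orders to the strongest one, $X\geq_{\rm lr}Y$, and try to settle that by showing the ratio of the two marginals, $R(t)=f_X(t|\bm)/f_Y(t|\bm)$, is monotone; your explicit use of the implication chain from \cite{Shaked:Shanthikumar:1994} is in fact tidier than the paper's closing remark that the other three proofs are ``similar''. Your expressions for $g(t)$ and $g'(t)$ are correct, and they are more complete than the paper's: the paper differentiates a version of the ratio in which everything inherited from $f_Y$ is effectively held fixed, so its derivative consists only of digamma, logarithmic and rational terms and contains no analogue of your transcendental term $-\exp(m_{10}-m_{11}t)\,t^{m_{12}-1}(m_{12}-m_{11}t)$. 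That term is precisely what makes the problem hard, and it is where your proposal stops.

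The genuine gap is that you never establish the one fact the theorem asserts, namely that $g'(t)$ keeps a single sign on $(0,\infty)$: everything after the computation of $g'$ is a strategy sketch, and the sketch cannot be completed as described. Carrying out your own final step (the tail comparison) exposes the obstruction. For $m_{11}>0$, Stirling's formula gives $\log f_X(t)=-t\log t+O(t)$ (a Poisson-type tail, since the $\log\Gamma(m_{02}+m_{12}t)$ growth is cancelled by $-(m_{02}+m_{12}t)\log(m_{01}+m_{11}t)$), while $\log f_Y(t)=-m_{01}t+O(\log t)$ (a gamma-type tail), so $R(t)\to 0$ as $t\to\infty$. Moreover, for $0<m_{02}<1$, which the hypotheses permit, $f_Y(t)\sim e^{c}t^{m_{02}-1}\to\infty$ while $f_X(t)$ tends to the positive constant $\Gamma(m_{02})e^{c}m_{01}^{-m_{02}}$, so $R(t)\to 0$ as $t\to 0^{+}$ as well. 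A positive function vanishing at both ends of $(0,\infty)$ cannot be monotone, so on part of the stated parameter range no constant sign for $g'$ exists, and the case split at $t=m_{12}/m_{11}$ cannot rescue this. There is also a direction problem you would need to resolve before your reduction step is legitimate: you aim at ``decreasing'' (the paper's item 4), but the implication chain you invoke is stated for the standard likelihood-ratio order, under which $X\geq_{\rm lr}Y$ means $R$ is \emph{increasing} — which is also what the paper's own proof claims to verify by deriving a nonnegative derivative; with the reversed convention, a decreasing $R$ would yield orderings of $Y$ over $X$ rather than the stated conclusions for $X$ over $Y$.
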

\begin{proof}
The likelihood ratio of $X$ to $Y$ is given by ${f_{X}(x|\bm)}\,/\,{f_{Y}(y|\bm)}$, which equals to
\begin{equation*}
    \frac{\Gamma \left(m_{02}+m_{12} x\right) \left(c+m_{11} x\right)^{m_{02}+m_{12} x}\exp\left(m_{01} y-m_{02} \log y+\exp(-m_{11} y) y^{m_{12}}\right)}{(x-1)!}.
\end{equation*}
Calculating the derivative of the natural logarithm of the likelihood ratio with respect to $x$ results in
\begin{eqnarray*}
    \frac{\partial}{\partial x} \left\{\log \left(\frac{f_{X}(x | \bm)}{f_{Y}(x | \bm)}\right)\right\} &=& -\frac{\Gamma (x) \psi^{(0)}(x)}{(x-1)!} + \frac{m_{11} \left(m_{02} + m_{12} x \right)}{c + m_{11} x}\cr && \cr 
    && \qquad + m_{12} \log\left(c + m_{11} x \right) + m_{12} \psi^{(0)} \left( m_{02} + m_{12} x\right)\cr 
    && \cr 
    & = & - \psi^{(0)} (x) + \frac{m_{11} \left(m_{02} + m_{12} x\right)}{c+m_{11} x} + m_{12} \log\left( c + m_{11} x \right)\cr 
    &&\cr 
    && \qquad + m_{12} \psi^{(0)} \left( m_{02} + m_{12} x\right), 
\end{eqnarray*}
where $\psi^{(0)}(\cdot)$ denotes the polygamma function, given by
$\psi^{(0)}(x) = \psi(z) = {\Gamma'(x)}\,/\,{\Gamma(x)}$.

Given that $x$ is an integer and the conditions $m_{10}, m_{11} \geq 0$, $m_{12} \geq 1$, and $m_{02} > 0$ hold, it follows that 
$$
\frac{\partial}{\partial x} \left\{\log \left(\frac{f_{X}(x | \bm)}{f_{Y}(x | \bm)}\right)\right\} \geq 0.
$$
This implies that 
$X\geq_{\rm lr} Y$. For $X\geq_{\rm hr} Y$, $X\geq_{\rm mrl} Y$, and $X\geq_{\rm st} Y$, the proof is similar.
\end{proof} 

\begin{theorem}\label{thm:2}
 The BPGC distribution in  \eqref{eq:BPGC} is log-convex  if and only if $m_{11}>0$.
 \end{theorem}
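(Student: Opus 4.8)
The plan is to work from the standard characterization of log-convexity: the density $f(\cdot\,|\bm)$ is log-convex exactly when $\log f$ is a convex function on its (continuously extended) domain, which for a twice-differentiable function amounts to positive semidefiniteness of the Hessian of $\log f$. So the first step is to write
\[
\log f(x,y|\bm) = c - \log(x!) - \log y + m_{10}x - m_{01}y - m_{11}xy + m_{02}\log y + m_{12}x\log y,
\]
treating $\log(x!)=\log\Gamma(x+1)$ for the continuous extension, and to assemble the three second-order partials. Most of these are already available from the shape-of-the-distribution subsection; what remains is to organize them into the Hessian and read off the decisive condition.

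The key structural observation I would exploit is that $m_{11}$ enters $\log f$ only through the single interaction term $-m_{11}xy$. Consequently it appears in exactly one Hessian entry, the mixed partial
\[
\frac{\partial^2 \log f}{\partial x\,\partial y} = -m_{11} + \frac{m_{12}}{y},
\]
while the pure second derivatives $\partial_{xx}\log f = -\psi^{(1)}(x+1)$ and $\partial_{yy}\log f = (1-m_{02}-m_{12}x)/y^2$ carry no $m_{11}$. Hence whatever curvature $m_{11}$ contributes must act through the off-diagonal, and the entire ``if and only if'' is forced to be driven by this one term. This reduces the statement to a sign and definiteness analysis of a $2\times 2$ matrix in which a single entry depends on $m_{11}$.

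From here I would treat the two implications separately. For sufficiency I would show that when $m_{11}>0$ the interaction term supplies precisely the curvature needed for the defining convexity inequality to hold on the support; for necessity I would set $m_{11}=0$, so that the interaction contribution vanishes, and then exhibit a configuration $(x_1,y_1),(x_2,y_2)$ violating the midpoint (equivalently, the determinant) convexity condition, forcing $m_{11}>0$. Both directions ultimately reduce to comparing the off-diagonal contribution against the fixed diagonal curvature.

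The hard part, and the step I would be most careful with, is reconciling the pure $x$-curvature with the claim: because $\log f$ inherits $-\log(x!)$ from the factorial, one has $\partial_{xx}\log f = -\psi^{(1)}(x+1)<0$, so the Hessian cannot be positive semidefinite in the naive unrestricted sense. This signals that ``log-convex'' must be read in the appropriate restricted sense here (for instance along the continuous coordinate $y$, or as convexity of the interaction structure rather than full joint convexity), and the genuine content is to fix that precise notion and then show that within it the only decisive quantity is $-m_{11}xy$. I expect the digamma and trigamma estimates together with the determinant bookkeeping to be routine once the correct notion is identified, with the explicit construction of a violating pair at $m_{11}=0$ being the one new computation required to close the necessity direction.
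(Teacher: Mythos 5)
Your proposal is not yet a proof: every decisive step is deferred. You never settle on the notion of log-convexity under which the statement is supposed to hold, you never carry out the sufficiency direction, and the necessity direction rests on an unspecified configuration ``violating midpoint convexity at $m_{11}=0$'' that you do not construct. As it stands, the proposal ends exactly where the proof would have to begin, so there is a genuine gap.

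That said, your diagnosis is sound, and it is worth seeing how it sits against what the paper actually does. The paper's proof is a one-line computation: it forms $-\log f$, evaluates the single mixed partial $\partial_y\partial_x\left(-\log f(x,y|\bm)\right)$, asserts that it equals $m_{11}$, and declares log-convexity equivalent to non-negativity of this quantity. So the paper's working definition of ``log-convex'' is a supermodularity-type condition on the cross partial of $-\log f$, not positive semidefiniteness of the Hessian of $\log f$; your observation that $\partial_{xx}\log f=-\psi^{(1)}(x+1)<0$ correctly rules out the latter notion, so you identified the right obstruction. But your (correct) computation of the cross partial, $\partial^2_{xy}\log f = m_{12}/y-m_{11}$, also exposes an inconsistency in the paper itself: the paper's proof silently drops the $m_{12}/y$ contribution of the $m_{12}x\log y$ term, even though its own Theorem~\ref{thm:3} records the local dependence function as $\xi(x,y)=m_{12}/y-m_{11}$. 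With the correct cross partial, requiring $\partial^2_{xy}\left(-\log f\right)=m_{11}-m_{12}/y\ge 0$ for all $y>0$ forces $m_{12}=0$ (let $y\to 0^{+}$), so the clean equivalence with $m_{11}>0$ cannot be recovered by any cross-partial-based notion unless $m_{12}=0$ is assumed. In short, the difficulty you flagged as ``the hard part'' is not a technicality you could have left for later; under the standard Hessian definition the claim is false outright, and under the paper's implicit cross-partial definition the claimed ``if and only if'' fails whenever $m_{12}>0$. No amount of determinant bookkeeping closes this; the theorem itself would need to be restated.
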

 
\begin{proof} 
 Consider the negative of the  natural logarithm of the BPGC distribution as specified in \eqref{eq:BPGC}. Evaluating the second-order partial derivative  with respect to $x$ and $y$ yields:
$$ 
\frac{\partial}{\partial y}\left[\frac{\partial}{\partial x} \bigg(-\log\left( f(x,y|\bm)\right) \bigg)\right] = m_{11},
$$
where ${\partial}\,/\,{\partial x}$ and ${\partial}\,/\,{\partial y}$ denote the partial derivative operators with respect to $x$ and $y$, respectively.

For the distribution to be log-convex, this second-order partial derivative must be non-negative for all $(x,y)\in \mathbb{R^{+}}$. Therefore, log-convexity holds if $m_{11}>0$. 

Conversely, if $m_{11} \leq 0$, there are points where the second-order partial derivative may be negative, violating the condition of log-convexity.

Thus, the BPGC distribution is log-convex if and only if $m_{11}>0$.
\end{proof}

Next, the local dependence property of the BPGC distribution will be discussed. The local dependence function is defined as
$$
\xi \left( x, y \right) = \frac{{\rm \partial}^2}{\partial x \partial  y} \log\left(f(x,y|\bm)\right).
$$

\begin{theorem}\label{thm:3}
  The distribution $f\left(x,y|\bm\right)$  given by \eqref{eq:BPGC} is said to satisfy the reverse rule of order $2$ if and only if $m_{11} > m_{12}$.
\end{theorem}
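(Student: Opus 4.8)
The plan is to read off the reverse rule of order two (RR$_2$) directly from the local dependence function $\xi(x,y)$ just introduced, using the standard fact that a strictly positive bivariate density is reverse-regular of order two exactly when $\log f$ is submodular, i.e. when its mixed partial $\xi(x,y)=\partial^2_{xy}\log f(x,y\mid\bm)$ is nonpositive on the support. This is the mirror image of the TP$_2$ subsection: where TP$_2$ asked for the $2\times2$ density determinant to be nonnegative, RR$_2$ is the reverse determinant inequality
\[
\begin{vmatrix}
f(x_1,y_1\mid\bm) & f(x_1,y_2\mid\bm)\\
f(x_2,y_1\mid\bm) & f(x_2,y_2\mid\bm)
\end{vmatrix}\le 0,\qquad 0<x_1<x_2,\ 0<y_1<y_2 .
\]
So the whole problem reduces to controlling the sign of one bilinear expression.

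First I would compute $\xi$. Since $\log f$ is additively separable apart from the two interaction terms $-m_{11}xy$ and $m_{12}x\log y$, every purely-$x$ or purely-$y$ piece (the $-\log(x!)$, $-\log y$, $m_{02}\log y$, $m_{10}x$, $-m_{01}y$ terms and the constant $c$) is annihilated by $\partial^2_{xy}$, leaving only the interaction, so that
\[
\xi(x,y)=\frac{\partial^2}{\partial x\,\partial y}\log f(x,y\mid\bm)=-m_{11}+\frac{m_{12}}{y}.
\]
As an independent cross-check I would run the finite-difference version parallel to the TP$_2$ derivation: the logarithm of the cross-ratio $f(x_1,y_1)f(x_2,y_2)\big/\big(f(x_1,y_2)f(x_2,y_1)\big)$ collapses to $m_{12}(x_2-x_1)(\log y_2-\log y_1)-m_{11}(x_2-x_1)(y_2-y_1)$, whose sign is precisely what the RR$_2$ determinant inequality tests. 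Having both the infinitesimal and the difference form in hand makes the translation to parameters transparent.

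Finally I would convert $\xi\le 0$ into a parameter condition. Because $x_2-x_1>0$ and both $\log y_2-\log y_1$ and $y_2-y_1$ are positive for $y_1<y_2$, the determinant is $\le 0$ exactly when $m_{11}(y_2-y_1)\ge m_{12}(\log y_2-\log y_1)$, and pointwise this reads $m_{11}\ge m_{12}/y$. The main obstacle is this residual $y$-dependence: the requirement $m_{11}\ge m_{12}/y$ is not uniform in $y$, so extracting the clean statement $m_{11}>m_{12}$ forces one to fix the scale at which the two interactions are compared. I would resolve it by comparing the interaction through its natural sufficient statistics $U_3=xy$ and $U_5=x\log y$ at the unit reference $y=1$, where they are commensurate, so that $\mathrm{sign}\,\xi=\mathrm{sign}(m_{12}-m_{11})$; the forward direction then follows from $m_{11}>m_{12}$, and for the converse I would exhibit, whenever $m_{11}\le m_{12}$, an admissible pair $(y_1,y_2)$ making the cross-ratio exceed one, thereby violating the reverse determinant inequality and showing RR$_2$ fails.
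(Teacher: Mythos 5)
Your computation of the local dependence function matches the paper's, $\xi(x,y)=\tfrac{m_{12}}{y}-m_{11}$, and your finite-difference cross-ratio check is the correct discrete counterpart of it. The genuine gap is in your final step. Reverse rule of order $2$ is a global property: it demands the determinant inequality for \emph{every} admissible pair $(y_{1},y_{2})$, equivalently $\xi(x,y)\le 0$ for \emph{every} $y$ in the support $(0,\infty)$. It cannot be certified by fixing a ``reference scale'' $y=1$ and reading off $\operatorname{sign}\xi(x,1)=\operatorname{sign}(m_{12}-m_{11})$; that gives only a necessary condition (RR$_{2}$ in a neighborhood of $y=1$), never sufficiency. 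In fact the forward implication you are trying to rescue is irreparable under the global reading: whenever $m_{12}>0$, the mean value theorem gives $\log y_{2}-\log y_{1}=(y_{2}-y_{1})/\eta$ for some $\eta\in(y_{1},y_{2})$, so the log cross-ratio equals $(x_{2}-x_{1})(y_{2}-y_{1})\left(\tfrac{m_{12}}{\eta}-m_{11}\right)$, which is strictly positive for any pair with $y_{1}<y_{2}<m_{12}/m_{11}$ --- no matter how large $m_{11}$ is. So your own converse construction (exhibiting a violating pair) succeeds whenever $m_{12}>0$, not merely when $m_{11}\le m_{12}$, which shows that the stated equivalence with $m_{11}>m_{12}$ cannot be proved this way: uniform nonpositivity of $\xi$ on $(0,\infty)$ forces $m_{12}=0$.

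You should know that the obstacle you flagged --- the condition $m_{11}\ge m_{12}/y$ is not uniform in $y$ --- is exactly the point the paper's own proof glosses over: it asserts that $m_{11}>m_{12}$ makes $\tfrac{m_{12}}{y}-m_{11}$ nonpositive for all $y\in(0,\infty)$, which fails for $y<m_{12}/m_{11}$. So your instinct in isolating the difficulty was sound, and your diagnosis is sharper than the paper's argument; but the resolution you propose (commensurating the sufficient statistics $xy$ and $x\log y$ at $y=1$) is a normalization convention, not a proof. The honest conclusions available are either (i) $m_{11}>m_{12}$ implies $\xi\le 0$ only on the region $y\ge m_{12}/m_{11}$ (in particular on $y\ge 1$), i.e.\ a restricted or local RR$_{2}$ statement, or (ii) global RR$_{2}$ holds precisely when $m_{12}=0$ and $m_{11}>0$. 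Either restatement can be proved by the determinant computation you already have; the theorem as written cannot.
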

\begin{proof}
To calculate the local dependence function, differentiate the logarithm of the function $f(x,y|\bm)$ with respect to $x$ and $y$. i.e.,
  \begin{equation}\label{eq:order2}
      \xi\left(x,y\right) = \frac{m_{12}}{y} - m_{11}.
  \end{equation}  
Next, if $m_{11}>m_{12},$ then the right-hand-side of \eqref{eq:order2} becomes non-positive for all $y\in \left(0, \infty\right)$. This implies that the distribution satisfies the reverse rule of order $2$. i.e., 
$$
\frac{m_{12}}{y} - m_{11} \le 0 \quad \forall y \in (0, \infty).
$$
Thus, this condition holds true if and only if $m_{11} > m_{12}$.
\end{proof}

The next theorem establishes the non-linearity of regression when $\left(X, Y\right)$ follows the BPGC distribution. Additionally, it characterizes this bivariate distribution.

\begin{theorem}\label{thm:4}
 If $(X,Y) \sim {\rm BPGC}(\bm)$, and if for all $x = 0, 1, \ldots$, 
$$
{\rm E}\left(Y | X = x\right) = \frac{m_{02} + m_{12}x }{m_{01} + m_{11} x y },
$$
and also if for all $y \in (0, \infty)$,
$$
{\rm E} \left(X | Y=y \right) = \exp\left( m_{10} - m_{11} y + m_{12} \log(y)\right);
$$
then, 
   \begin{itemize}
       \item the joint distribution of $\left(X, Y\right)$ is uniquely determined and follows the form given in \eqref{eq:BPGC}; 
       \item the regressions of $Y$ on $X$ and $X$ on $Y$ are both non-linear.
   \end{itemize}
\end{theorem}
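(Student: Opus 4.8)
The plan is to read this as a conditional-specification characterization in the spirit of \citet{Arnold:1999}, since the two displayed regression functions are precisely the mean parameters of the conditional laws in \eqref{eq:dist_X|Y} and \eqref{eq:dist_Y|X}. First I would observe that $E(X\mid Y=y)=\exp(m_{10}-m_{11}y+m_{12}\log y)$ is exactly the Poisson intensity of \eqref{eq:dist_X|Y}, and that $E(Y\mid X=x)=(m_{02}+m_{12}x)/(m_{01}+m_{11}x)$ is the shape-to-rate ratio of the Gamma conditional in \eqref{eq:dist_Y|X}. Prescribing these two regressions is therefore equivalent to prescribing that $X\mid Y$ is Poisson and $Y\mid X$ is Gamma with the stated parameter functions, and both are exponential families whose natural statistics are $\{x\}$ and $\{y,\log y\}$, respectively.

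Second, to recover the joint law I would write the two compatible factorizations $f(x,y)=f_{X\mid Y}(x\mid y)\,f_Y(y)=f_{Y\mid X}(y\mid x)\,f_X(x)$ and impose the compatibility identity $f_{X\mid Y}(x\mid y)/f_{Y\mid X}(y\mid x)=f_X(x)/f_Y(y)$. Substituting the Poisson and Gamma densities, taking logarithms, and collecting terms turns this into a functional equation whose right-hand side must separate into a function of $x$ alone times a function of $y$ alone. The bilinear cross-terms $xy$ and $x\log y$ generated by pairing the two natural-statistic sets can only be reconciled through the five-parameter bilinear exponent appearing in \eqref{eq:BPGC}; solving the functional equation then forces $f_X$ and $f_Y$ into the forms already derived in \eqref{eq:h(X)} and \eqref{eq:g(y)} and identifies the joint density up to the normalizing constant $c$. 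Finiteness of $c$ together with the constraints $m_{01},m_{02},m_{10}>0$ and $m_{11},m_{12}\ge 0$ then makes the representation unique.

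I expect this compatibility/functional-equation step to be the main obstacle: the delicate point is to show that once both conditional families are fixed no residual freedom survives in the marginals, so that the joint is genuinely \emph{determined} rather than merely constrained. This is exactly the content of Arnold's theorem on conditionals in exponential families, which I would invoke (or reprove via the separation argument above) to close the uniqueness claim.

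Finally, the non-linearity of the two regressions follows by direct inspection of the closed forms. Since $E(X\mid Y=y)=e^{m_{10}}\,y^{m_{12}}\,e^{-m_{11}y}$ is not affine in $y$ whenever $m_{11}>0$, and $E(Y\mid X=x)=(m_{02}+m_{12}x)/(m_{01}+m_{11}x)$ is a genuine ratio of two linear functions and hence non-linear in $x$ as soon as $m_{11}>0$, neither regression is linear. This completes the argument and also highlights the contrast with the bivariate normal, whose regressions are linear.
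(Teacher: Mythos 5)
Your proposal is correct and follows the same conceptual route as the paper---obtain the joint law from the two conditional specifications, then read non-linearity directly off the closed-form regressions---but your write-up is considerably more substantive than the paper's own proof. The paper disposes of the first bullet in two sentences, simply asserting that the conditionals \eqref{eq:dist_X|Y} and \eqref{eq:dist_Y|X} ``uniquely define'' the joint distribution; your compatibility identity $f_{X\mid Y}(x\mid y)/f_{Y\mid X}(y\mid x)=f_X(x)/f_Y(y)$, and the separation-of-variables functional equation it generates, is precisely the mechanism that assertion leaves implicit (it is the theorem of \citet{Arnold:1999} on conditionals in exponential families), so the step you single out as ``the main obstacle'' is exactly the step the paper skips. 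Two details in your version actually improve on the paper. First, you silently corrected a typo that the paper carries in both the theorem statement and its proof: the displayed regression ${\rm E}(Y\mid X=x)=(m_{02}+m_{12}x)/(m_{01}+m_{11}xy)$ contains a spurious $y$; the Gamma rate is $m_{01}+m_{11}x$, as you wrote. Second, your qualification that non-linearity requires $m_{11}>0$ is genuinely needed: when $m_{11}=0$ and $m_{12}=1$ (the compound Poisson special case the paper itself highlights), ${\rm E}(X\mid Y=y)=e^{m_{10}}y$ and ${\rm E}(Y\mid X=x)=(m_{02}+x)/m_{01}$ are both exactly linear, so the paper's unqualified non-linearity claim fails there. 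One caution on your own logic: prescribing the two regressions is not by itself equivalent to prescribing the conditional families, since conditional means never determine conditional distributions; your equivalence is legitimate only because the theorem's standing hypothesis $(X,Y)\sim{\rm BPGC}(\bm)$ already forces $X\mid Y$ to be Poisson and $Y\mid X$ to be Gamma, after which the stated regressions pin down the parameter functions. With that reading made explicit, your argument is sound and, if anything, closer to a complete proof than what the paper provides.
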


\begin{proof}
With the parameters $\bm$ of the BPGC distribution, the joint distribution of $(X, Y)$ is defined by the conditional distributions in \eqref{eq:dist_X|Y} and \eqref{eq:dist_Y|X}. These conditions uniquely define the joint distribution of $(X, Y)$ as the BPGC distribution. 

To establish the non-linearity of the regression $Y$ on $X$, it is evident that the conditional expectation of $Y$ given $X$ is non-linear in $x$, which is expressed as the ratio ${(m_{02}+m_{12}x)}\,/\,{m_{01}+m_{11}xy}$. In a similar manner, the regression of $X$ on $Y$ reveals that the conditional expectation of $X$ given $Y$ is non-linear in $y$, characterized by the function $\exp\left(m_{10}-m_{11}y+m_{12}\log y\right)$. Hence, both regressions show non-linearity, clearly indicating that the relationship between $X$ and $Y$ inside the BPGC distribution is non-linear.
\end{proof}

\section{Parameter Estimation}\label{sec:Mle}
This section focuses on calculating the maximum likelihood estimators (MLEs) for the unknown parameters $\bm$ of the $\text{BPGC}(\bm)$ distribution, using a random sample of size $n$, denoted as $\left(\bX, \mathbf{Y}\right) = \{\left(X_1,Y_1\right), \left(X_2,Y_2\right), \ldots, \left(X_n,Y_n\right)\}$. Therefore, the log-likelihood function $ \ell \left(\bm |(\bX, \mathbf{Y}) \right)$ can be written as
\begin{eqnarray}\label{eq:MLE}
 \ell \left(\bm |(\bX, \mathbf{Y})\right) & = & \log \bigg( \prod_{i=1}^{n}X_{i} ! Y_{i} \bigg)^{-1}
 +n c + m_{10}\sum_{i=1}^{n}X_{i} -m_{11} \sum_{i=1}^{n}X_{i}Y_{i}\cr 
 && \cr 
 && \quad  - m_{01}\sum_{i=1}^{n}Y_{i}+ m_{02}\sum_{i=1}^{n}\log Y_{i} + m_{12}\sum_{i=1}^{n}X_{i}\log Y_{i} \, .
 \end{eqnarray}
The MLEs of the parameters are derived by taking partial derivatives of the log-likelihood function $\ell \left(\bm |(\bX, \mathbf{Y})\right)$ as shown as in \eqref{eq:MLE} with respect to $\bm$, and equating them to zero. However, the complexity of the underlying equations prevents analytically getting explicit forms of these estimators. Consequently, numerical techniques are required to determine the MLEs.

Constraints on the parameters are crucial in parameter estimation procedure for the BPGC distribution. The parameters $\bm = (m_{10}, m_{01}, m_{11}, m_{02}, m_{12})$ must satisfy the following conditions: $m_{01} > 0$, $m_{02} > 0$, $m_{10} > 0$, $m_{11} \ge 0$, and $m_{12} \geq 0$. Moreover, the normalizing constant $c$ is selected based on the other parameters. Due to these constraints, specialized optimization procedures are necessary for estimating the MLEs of the unknown parameters.

Barrier methods are highly effective for solving constrained optimization problems among the various optimization algorithms available. By incorporating a barrier component to the objective function, these methods penalize solutions that approach the boundaries of the feasible region. This penalty term ensures that the optimization process remains within the acceptable bounds, effectively establishing an invisible barrier that prevents constraint violations. 

The adaptive barrier algorithm, as outlined in \cite{Lange:2001}, is a sophisticated approach employed for the computation of MLEs. To find the MLEs of the parameters $m$, one can simply use the \texttt{mlEst} function available in this package.

Deriving closed-form expressions for the standard error of the MLEs presents theoretical challenges. This difficulty is particularly evident due to the complexity of the log-likelihood function in \eqref{eq:MLE}.
 
\section{Simulation Study}\label{sec:4}
This section provides details of the simulation study based on the methodology outlined in section \ref{sec:Mle}. The Gibbs sampling technique \citep{Gelfand:2000} is employed to generate random samples from the BPGC distribution, using the conditional distributions specified in \eqref{eq:dist_X|Y} and \eqref{eq:dist_Y|X}. The process is facilitated by the \texttt{rBPGC} function found in the \texttt{BPGC} library.

The MLEs derived using the \texttt{mlEst} function from the \texttt{BPGC} library for four different sets of model parameters are presented in Table \ref{tab:Sim1}, based on samples of varying sizes. As shown in Table \ref{tab:Sim1}, the estimates converge towards the true parameter values with an increase in sample size, as expected. More observations are necessary to achieve a more precise estimate when the parameter values are large; however, this also depends on the initial values that the generating algorithm takes. 

\begin{table}[!b]
\caption{MLEs of $\mathbf{m}$ for different sample sizes across four different cases from the BPGC distribution using Gibbs sampler method.}\label{tab:Sim1}
    \centering
    \resizebox{\textwidth}{!}{
    \begin{tabular}{@{}crccccc@{}}
    \hline
    $\bm = (m_{10}, m_{01}, m_{11}, m_{02}, m_{12})$ & $N$ & $\hat{m}_{10}$ & $\hat{m}_{01}$ & $\hat{m}_{11}$ & $\hat{m}_{02}$  & $\hat{m}_{12}$\\ 
    \hline 
    $(1,1,0.1,1,0.1)$ & 100 & 1.148076 & 0.546763 &  0.313887 & 0.820308 & 0.157127  \\
    & 1000 & 1.029098 & 0.881191 & 0.114075 & 0.945462 & 0.096011\\
    & 10000 & 1.007424 & 0.949148 & 0.105833 & 0.949690 & 0.105854\\
    & 100000 & 0.996931 & 1.008777 & 0.097689 & 1.000099 & 0.099262 \\
    \hline 
     $(1,1,1,1,1)$  & 100 & 1.607437 & 0.695712 &  1.728029 & 0.847946 & 1.317043  \\
     & 1000 & 1.250769 & 0.851429 & 1.181627 & 0.892492 & 1.159671\\
     & 10000 & 1.026660 & 0.950852 & 1.016816 & 0.952526 & 1.029666\\
     & 100000 & 1.004788 & 1.000320 & 1.002211 & 0.997736 & 1.009262 \\
     \hline 
     $(1,5,1,5,1)$ & 100 & 3.139724 & 2.977665 & 3.070716 & 3.290903 & 2.653367  \\
     & 1000 & 1.424880 & 4.455080 & 1.377309 & 4.588816 & 1.341538\\
     & 10000 & 1.039891 & 4.779689 & 1.040329 & 4.759322 & 1.073789\\
     & 100000 & 0.990017 & 4.973420 & 0.99233 & 4.980975 & 0.984525\\ 
     & 500000 & 1.006570 & 4.996552 & 1.004323 & 4.995206 & 1.003916 \\
     \hline 
     $(5,5,5,5,5)$ & 100 & 7.183235 & 3.270673 & 7.190890 & 3.490521 & 6.927195 \\
     & 1000 & 5.326312 & 4.580192 & 5.284450 & 4.697007 & 5.202252\\
     & 10000 & 4.998869 & 4.759772 & 4.998852 & 4.756931 & 4.997521\\
     & 100000 & 4.959363 & 4.978258 & 4.961495 & 4.983204 & 4.949159 \\ 
     & 500000 & 5.031970 & 4.999041 & 5.028674 & 4.996252 & 5.027729\\
     \hline 
    \end{tabular} 
    }
\end{table}

The 3D-histogram of simulated data is displayed in Figure \ref{fig:Sim1}, alongside the fitted and true BPGC distributions for the cases in Table \ref{tab:Sim1}, using $N = 1000$ observations. The BPGC distribution estimated by the MLEs (blue), closely matches the BPGC distribution fitted with the true parameters (red). As shown in Figure \ref{fig:Sim1}, a good fit between the estimated BPGC distribution and the empirical distribution is evident. 

To validate the accuracy of our fitted BPGC distribution against empirical data, a multivariate two-sample Kolmogorov-Smirnov (KS) test, as described in \cite{Fasano:Franceschini:1987}-- commonly referred to as the two-sample Fasano-Franceschini (FF) test -- is employed.  This test evaluates whether the BPGC distribution, fitted via maximum likelihood estimation, represents the underlying distribution of observed data. The procedure involves:
\begin{enumerate}
    \item[(1)] The original dataset is used as the first sample, representing the empirical distribution (denoted as $F_1$) of observed data.
    \item[(2)] The parameters of the BPGC distribution are estimated using the MLE method, yielding the estimate $\hat{\bm}$.
    \item[(3)] A second sample is simulated from the BPGC distribution with the estimated parameters $\hat{\bm}$, i.e., $F_2 = \text{BPGC}(\hat{\bm})$.
    \item[(4)] The two-sample Fasano-Franceschini (FF) test is applied to evaluate the null hypothesis $\mathcal{H}_{0}: F_1 = F_2$ against the alternative $\mathcal{H}_1: F_1 \neq F_2$, assessing whether the distributions of $F_1$ and $F_2$ significantly differ.
\end{enumerate}

The \texttt{FFtest} function within the \texttt{BPGC} library automates these steps by handeling parameter estimation, simulating the second sample, and performing the FF test, thus providing a comprehensive evaluation of how goodness-of-fit of the BPGC model to the observed data.

The test statistics and $p$-values are presented in Table \ref{tab:sim2}, showing that the fitted distribution using the MLEs is closely aligned with the distribution using the true parameter values.

\begin{table}[!ht]
    \centering
    \caption{The test-statistics and $p$-values}
    \label{tab:sim2}
    \begin{tabular}{@{}lcc@{}}
    \hline 
    $\bm = (m_{10}, m_{01}, m_{11}, m_{02}, m_{12})$ & test statistic & $p$-value \\ 
    \hline 
    $(1,1,0.1,1,0.1)$ & 568000 & 0.8637 \\ 
    $(1,1,1,1,1)$ & 616000 & 0.6975 \\
    $(1,5,1,5,1)$ & 531000 & 0.9087  \\
    $(5,5,5,5,5)$ & 549000 & 0.8644 \\ 
    \hline 
    \end{tabular}    
\end{table}

\begin{figure}[!ht]
    \centering
    \begin{tabular}{cc}
        \includegraphics[scale = 0.5, trim={40mm 0mm 40mm 0mm},clip]{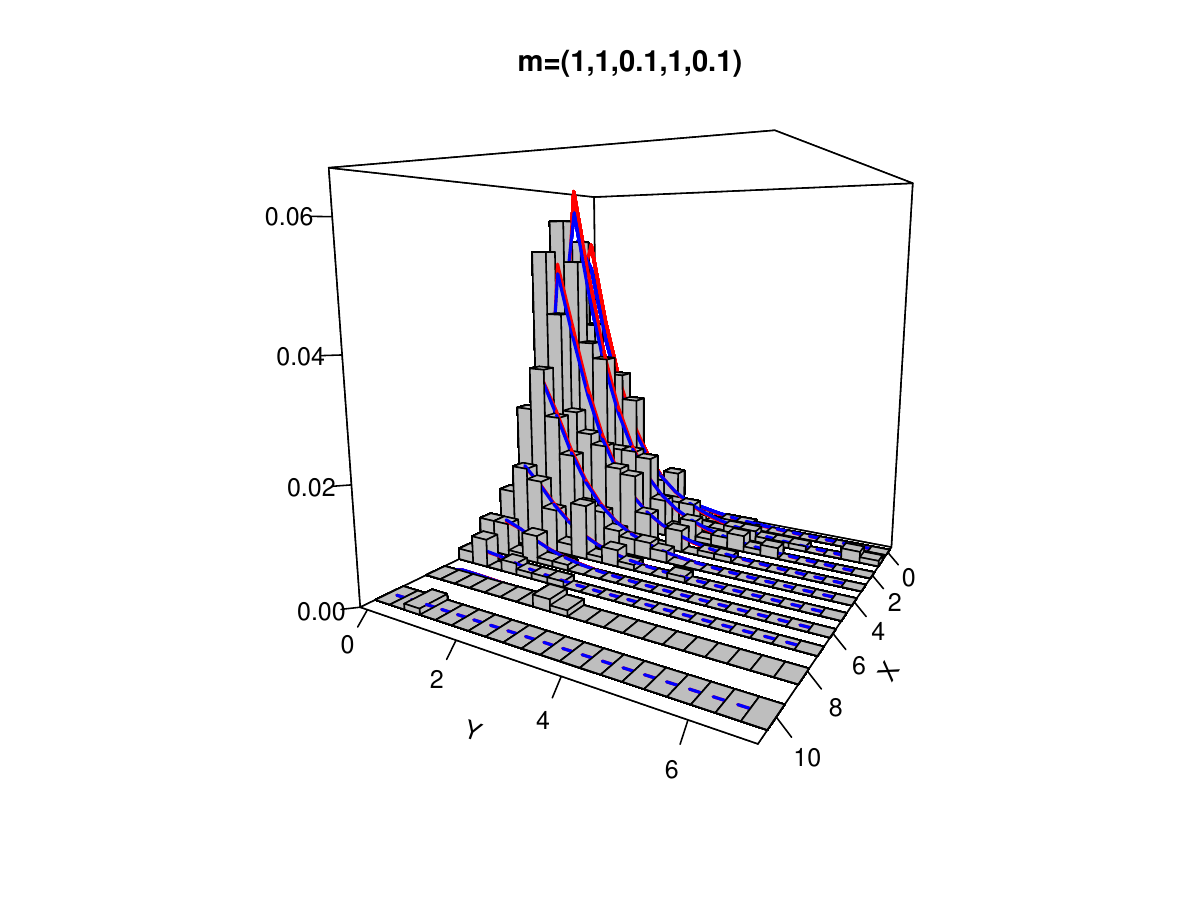} &       
        \includegraphics[scale = 0.5, trim={40mm 0mm 40mm 0mm},clip]{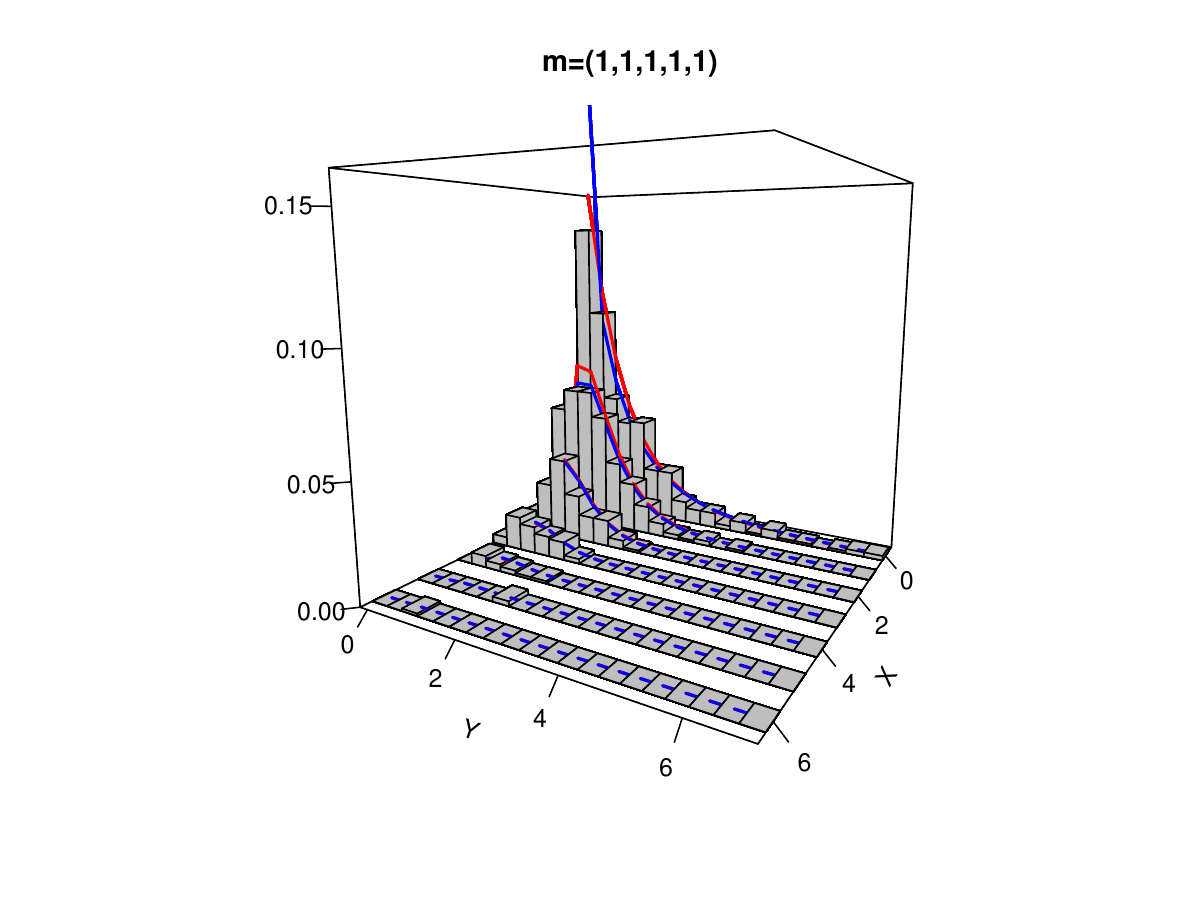} 
        \\        
        \includegraphics[scale=0.5, trim={40mm 0mm 40mm 0mm},clip]{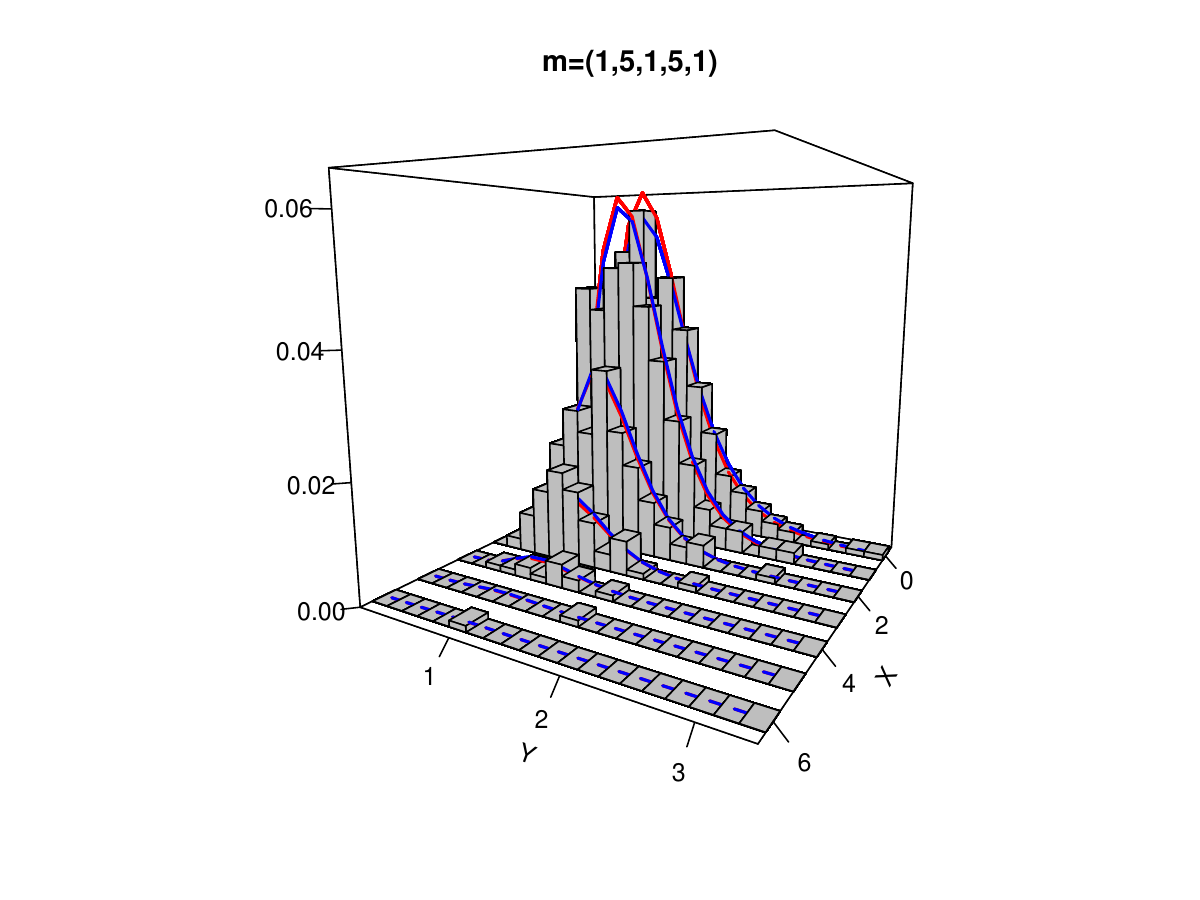} & 
        \includegraphics[scale=0.5, trim={40mm 0mm 40mm 0mm},clip]{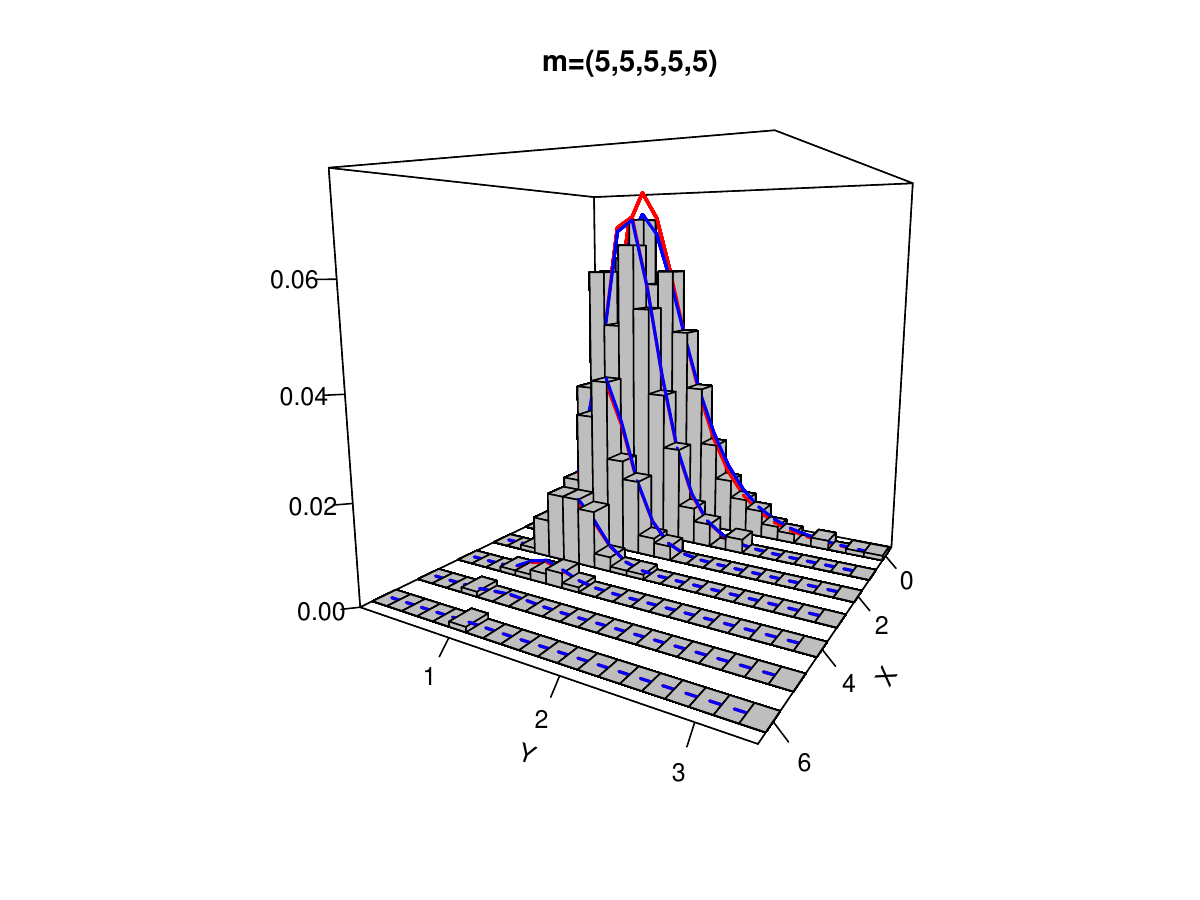}\\
    \end{tabular}
     \caption{The 3D histogram fo the simulated, the true (red) and the fitted (blue) BPGC distribution.}\label{fig:Sim1}
    \label{fig:1}
\end{figure}

\section{Application}\label{sec:5}
In this section, we apply the BPGC model to a toy dataset concerning hospital admissions and treatment costs, available as \texttt{hospitalData} in the \texttt{BPGC} package. The dataset comprises $500$ observations, where each observation represents a single day. The variable `number\_of\_admissions' (denoted by $X$) corresponds to the number of admissions per day, while `total\_cost\_of\_treatment' (denoted by $Y$) represents the total treatment costs incurred per day. Table \ref{tab:app1} provides descriptive statistics for this dataset. Figure \ref{fig:app1} presents a bar plot depicting the distribution of the number of admissions ($X$) and a histogram showing the distribution of the total treatment costs ($Y$). In addition, Figure \ref{fig:app2} illustrates the histogram overlaid with a fitted probability density curve to visualize the goodness-of-fit of the BPGC model to the observed data.

\begin{table}[!ht]
    \centering
    \caption{Descriptive measures of the hospital dataset}\label{tab:app1}
    \begin{tabular}{@{}ccccccc@{}}
    \hline 
    & Min & $Q_{0.25}$ & Median & Mean & $Q_{0.75}$ & Max \\
    \hline 
    $X$ & $2.000$ & $8.000$ & $10.000$ & $9.856$ & $12.000$ & $20.000$\\ 
    $Y$ & $0.821$ & $8.250$ & $13.614$ & $14.578$ & $19.181$ & $53.671$\\
    \hline 
    \end{tabular}
\end{table}

\begin{figure}[!ht]
    \centering
    \includegraphics[height = 10cm, width = \textwidth]{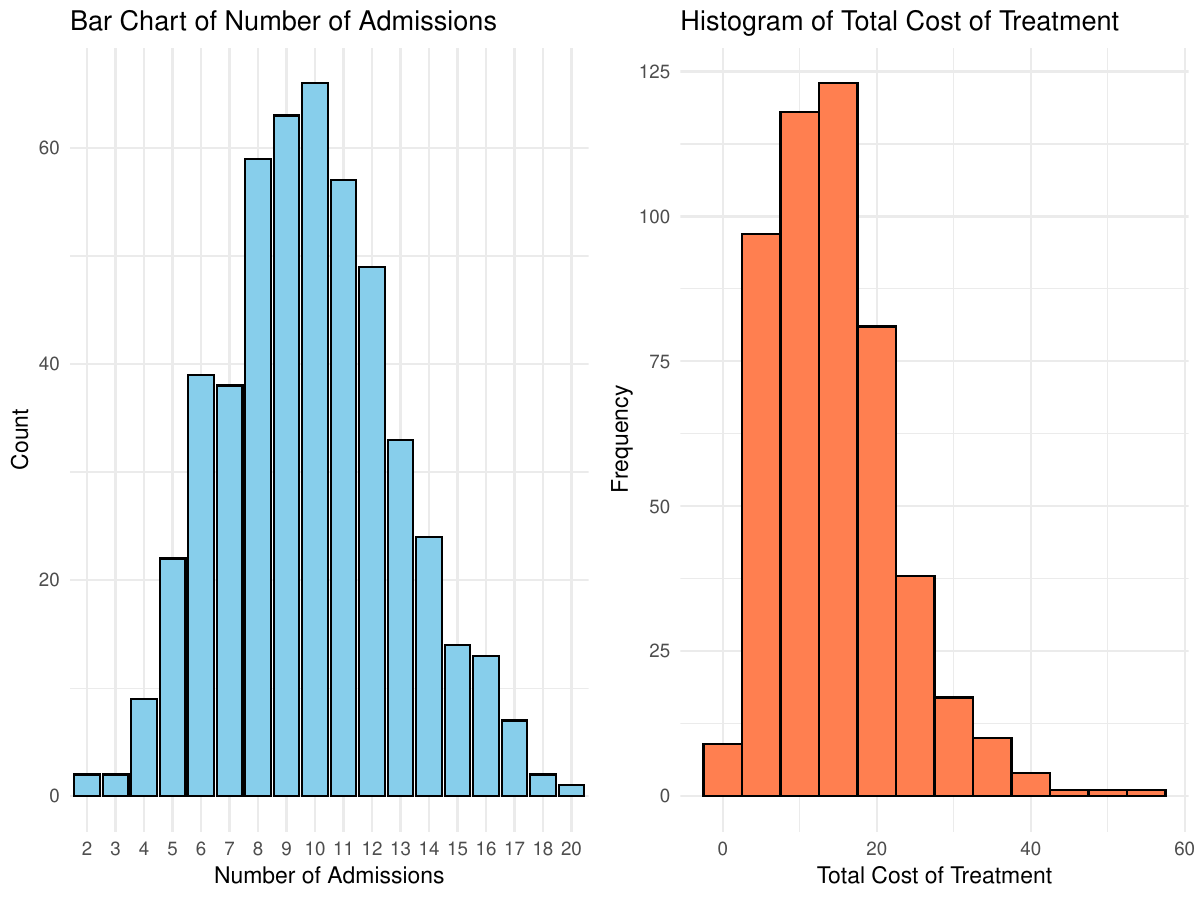} 
     \caption{The Barplot of number of admissions (left) and the histogram of total cost of treatment (right).}\label{fig:app1}
\end{figure}

Using the \texttt{mlEst} function from the \texttt{BPGC} library, the estimated values for $\mathbf{m}$ are obtained as
\begin{equation}\label{eq:est_m}
    \hat{m}_{10} = 2.1809, \quad \hat{m}_{01} = 0.1880 \quad \hat{m}_{11} = 0.0018 \quad \hat{m}_{02} = 2.4806, \quad \hat{m}_{12} = 0.0535
\end{equation}

To evaluate the adequacy of the BPGC distribution's fit, the two-sample Fansano-Franceschini test described in the preceding section is applied using the \texttt{FFtest} function. This test compares the empirical distribution of the observed data with a simulated sample from the estimated BPGC distribution. The results of the test are as follows:
\begin{equation}\label{eq:results}
    \text{test statistic} = 381500 \quad \text{and} \quad \text{$p$-value} = 0.9307.
\end{equation}
With a $p$-value of $0.9307$, the fitted BPGC distribution does not differ significantly from the observed data. The high $p$-value indicates that the hospital admissions and treatment cost data are well-fitted by the BPGC model.

\begin{figure}[!ht]
    \centering
    \includegraphics[scale = 0.875, trim={30mm 10mm 40mm 10mm},clip]{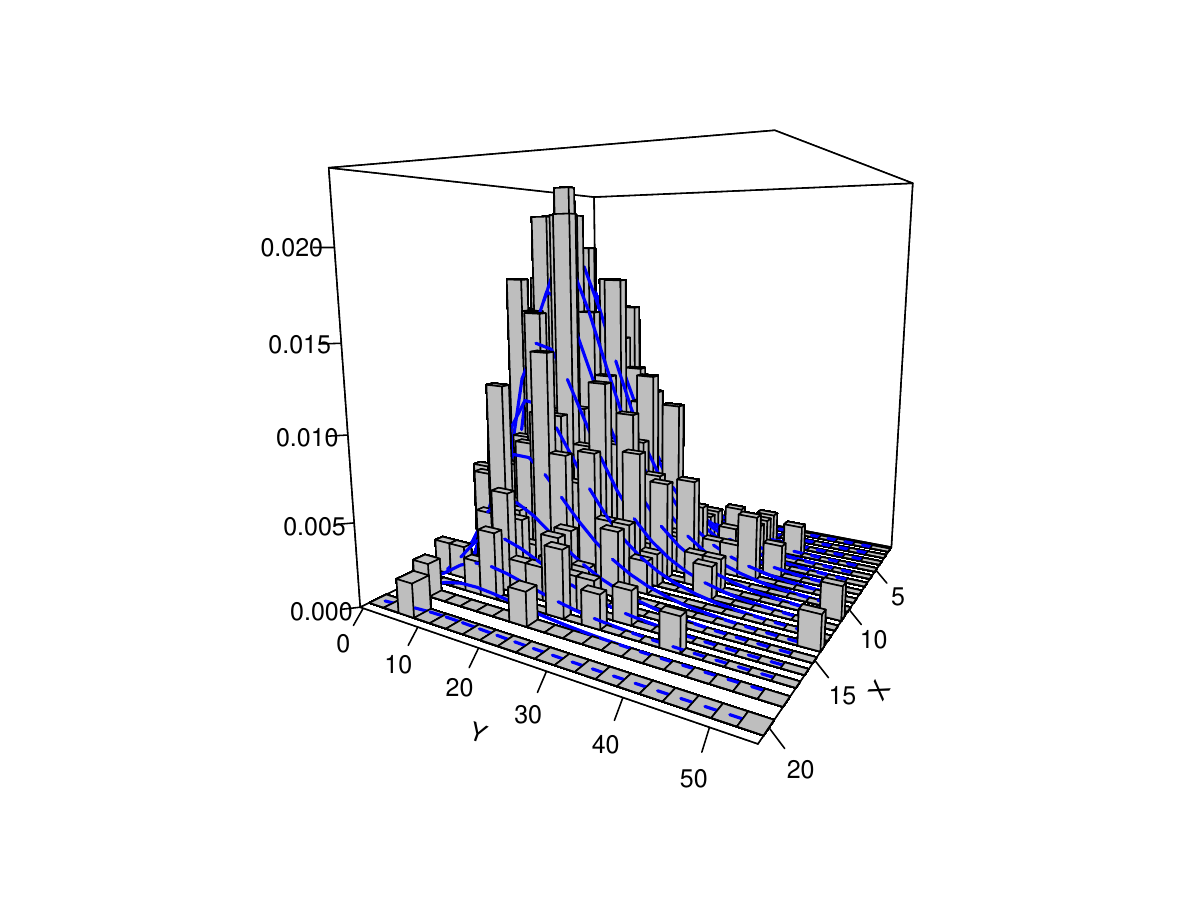} 
     \caption{The 3D histogram for the hospital dataset with the fitted (blue) BPGC distribution.}\label{fig:app2}
\end{figure}

Based on the results presented in \eqref{eq:results}, where the test statistic is $381500$ and the corresponding $p$-value is $0.9307$, and considering the fitted curve illustrated in Figure \ref{fig:app2}, which was generated using the estimated parameters in \eqref{eq:est_m}, it can be confidently asserted that the BBCD model provides an excellent fit to the hospital data.

\section{Concluding Remarks}\label{sec:6}
There are several well-established bivariate mixtures distributions in the rich literature of distribution theory, either both continuous or discrete. Interestingly, relatively few distributions arise from a mixture of a discrete and a continuous variable. In this paper, we provide a detailed discussion of a joint distribution of such mixtures, namely the BPGCdistribution. In this distribution, among the two conditionals, one is discrete (Poisson), while the other is absolutely continuous (Gamma), with parameters that depend on the conditioning variable(s), respectively.

The foundation of this distribution can be found in \citet{Arnold:1999}, yet previous discussions on the structural properties of the BPGC distribution have been relatively limited until this study. Since the BPGC distribution belongs to the expansive class of exponential family of distributions, it enjoys several useful structural properties, some of which have been explored in this paper.  Despite its computational complexity, it is possible to consider a multivariate extension. 

A major question that might arise concerns the possible applicability of such a model in a practical setting. We are currently working on this issue, the results will be published in a future study.

\section*{Acknowledgments}
The work of Filipe J. Marques and Mina Norouzirad is funded by national funds through the FCT – Funda\c{c}\H{a}o para a Ci\^{e}ncia e a Tecnologia, I.P., under the scope of the projects UIDB/00297/2020 (\url{https://doi.org/10.54499/UIDB/00297/2020}) and UIDP/00297/2020 (\url{https://doi.org/10.54499/UIDP/00297/2020}) (Center for Mathematics and Applications)”.

\end{document}